\newtheorem{theorem}{Theorem}
\newtheorem{lemma}{Lemma}
\newtheorem{proposition}{Proposition}
\newtheorem{definition}{Definition}
\newtheorem{remark}{Remark}
\newtheorem{example}{Example}
\newcommand{\Rmnum}[1]{\expandafter\@slowromancap\romannumeral #1@}
\newenvironment{breakablealgorithm}
{
	\begin{center}
		\refstepcounter{algorithm}
		\hrule height.8pt depth0pt \kern2pt
		\renewcommand{\caption}[2][\relax]{
			{\raggedright\textbf{\ALG@name~\thealgorithm} ##2\par}%
			\ifx\relax##1\relax 
			\addcontentsline{loa}{algorithm}{\protect\numberline{\thealgorithm}##2}%
			\else 
			\addcontentsline{loa}{algorithm}{\protect\numberline{\thealgorithm}##1}%
			\fi
			\kern2pt\hrule\kern2pt
		}
	}{
		\kern2pt\hrule\relax
	\end{center}
}
\begin{document}
\title{\LARGE Neuromimetic Linear Systems --- Resilience and Learning}
\author{Zexin Sun \& John Baillieul}
\maketitle
\let\thefootnote\relax\footnotetext{\noindent
\hspace{-0.1in}\hrulefill
\hspace{0.8in}\\John Baillieul is with the Departments of Mechanical Engineering, Electrical and Computer Engineering, and the Division of Systems Engineering at Boston University, Boston, MA 02115. Zexin Sun is with the Division of Systems Engineering at Boston University.  The authors may be reached at {\tt \{johnb, zxsun\}@bu.edu}. \newline Support from various sources including the Office of Naval Research grant number N00014-19-1-2571 is gratefully acknowledged. }


\begin{abstract}
\noindent Building on our recent work on {\em neuromimetic control theory}, new results on resilience and neuro-inspired quantization are reported.  The term neuromimetic refers to the models having features that are characteristic of the neurobiology of biological motor control.  As in previous work, the focus is on what we call {\em overcomplete} linear systems that are characterized by larger numbers of input and output channels than the dimensions of the state.  The specific contributions of the present paper include a proposed {\em resilient} observer whose operation tolerates output channel intermittency and even complete dropouts.  Tying these ideas together with our previous work on resilient stability, a resilient separation principle is established.  We also propose a {\em principled quantization} in which control signals are encoded as simple discrete inputs which act collectively through the many channels of input that are the hallmarks of the overcomplete models.  Aligned with the neuromimetic paradigm, an {\em emulation} problem is proposed and this in turn defines an optimal quantization problem.  Several possible solutions are discussed including direct combinatorial optimization, a Hebbian-like iterative learning algorithm, and a deep Q-learning (DQN) approach.  For the problems being considered, machine learning approaches to optimization provide valuable insights regarding comparisons between optimal and nearby suboptimal solutions.  These are useful in understanding the kinds of resilience to intermittency and channel dropouts that were earlier demonstrated for continuous systems.
\end{abstract}
\begin{flushleft} {{\bf Keywords}: Neuromimetic control, parallel quantized actuation, channel intermittency, neural emulation} 
\end{flushleft}

 \section{Introduction}

Recent research has been aimed at understanding control theoretic models in which the numbers of control input channels or system observation channels exceed -- and perhaps greatly exceed -- the dimension of the state \cite{Baillieul1},\cite{Baillieul2},\cite{baillieul2021}.  Because these control system models may be such that they remain controllable and observable even though some of the control or output channels are removed, we shall call such systems {\em overcomplete}, borrowing terminology from signal processing.
There are a number of important features that may be found in linear models of this type, including the fact that by adding control channels it is possible to reduce the control-energy cost of moving between arbitrary endpoints in the state space and the effect of noise can be reduced.  It was shown in \cite{Baillieul2} that having more control input channels made it possible to design constant gain feedback laws that produced resilient stability in the sense that the closed-loop system remained asymptotically stable despite control channels being only intermittently available or even permanently unavailable.  The present paper extends this work on resilient stability and control in several directions.  Complementing results on feedback control designs that are resilient to channel drop-outs, we propose a class of observer designs that are similarly resilient with respect to channel dropouts.  With these observers, we show that there is a corresponding resilient separation principle such that feedback controls that use the observer-based estimates perform well even when there may be both output and control channel dropouts.

\smallskip

Beyond the resilience advantage of overcomplete systems, a long standing aim of our research has been to explore the performance of such systems when their operation is governed by inputs and output that are drawn from finite dictionaries.  The second part of the paper treats an {\em emulation} problem in which the goal is to design control protocols that select control actions from a finite set in such a way that the system with quantized inputs will closely emulate the performance of a prescribed closed-loop linear system.  Because a long-term goal is an understanding of online learning and adaptation, our approach addresses the emulation problem using two different learning algorithms---a simple Hebb-Oja algorithm \cite{oja1982simplified} and our own version of a recently proposed deep Q-learning algorithm \cite{mnih2015human}.  Some low dimensional examples show that these learning methods can be resilient in the sense that the system can overcome the loss of an input channel and relearn good emulation protocols using the same dictionaries.  The paper concludes with a brief discussion of planned research whose aim is to further explore the space of quantized control with very large numbers of input and output channels.

\section{Linear Systems with Large Numbers of Inputs and Outputs}

The linear time-invariant (LTI) systerms to be studied have the simple form
\begin{equation}
\begin{array}{l}
\dot x(t)=Ax(t) + Bu(t), \ \ \ x\in\mathbb{R}^n, \ \ u\in\mathbb{R}^m, \ {\rm and}\\[0.07in]
y(t)=Cx(t), \ \ \ \ \ \ \ \ \ \ \ \ \ \ y\in\mathbb{R}^q.\end{array}
\label{eq:jb:linear}
\end{equation}
The primary way that the models of this form in what follows are different from most linear time-invariant control systems in the literature is that we shall assume large numbers of input channels ($m\gg n$) and large numbers of output channels ($q\gg n$).

\bigskip

\subsection{Lifted operators associated with large multiplicities of input/output channels}
Under these assumptions on dimensionality of the state, input, and output spaces, the matrices $B:\mathbb{R}^m\rightarrow\mathbb{R}^n$ and $C:\mathbb{R}^n\rightarrow\mathbb{R}^q$ define respectively a left action on $m\times n$ matrices $\mathbb{R}^{m\times n}$ and a right action on $\mathbb{R}^{n\times q}$: ${\cal L}_B({\rm U})=B\cdot {\rm U}$ and ${\cal R}_C({\rm U})={\rm U}\cdot C$.  These {\em liftings} are depicted by
\[
\begin{array}{clcc}
{\cal L}_B:&\mathbb{R}^{m\times n}&\rightarrow&\mathbb{R}^{n\times n}\\
&\ \ \big\vert&&\big\vert\\
B:&\mathbb{R}^m&\rightarrow&\mathbb{R}^n
\end{array}
\ \ \ \ {\rm and} \ \ \ \ 
\begin{array}{clcc}
{\cal R}_C:&\mathbb{R}^{n\times q}&\rightarrow&\mathbb{R}^{n\times n}\\
&\ \ \big\vert&&\big\vert\\
C:&\mathbb{R}^n&\rightarrow &\mathbb{R}^q.\end{array}
\]
respectively.  The following preliminary observations will be useful in characterizing stable constant feedback and observer gains that are resilient with respect to channel dropouts and intermittency.

\begin{lemma} Suppose $m,q>n$ and $B:\mathbb{R}^m\rightarrow\mathbb{R}^n$ and $C:\mathbb{R}^n\rightarrow\mathbb{R}^q$ have full rank $n$.  Then ${\cal L}_B$ and ${\cal R}_C$ have rank $n^2$ and nullspace dimensions $n(m-n)$ and $(q-n)n$ respectively.
\end{lemma}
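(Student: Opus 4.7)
The plan is to reduce the claim to rank-nullity once surjectivity of each lifted operator is established. Both $\mathcal{L}_B$ and $\mathcal{R}_C$ map into $\mathbb{R}^{n\times n}$, so if I can show that each is onto, the stated rank $n^2$ follows immediately, and the nullspace dimensions $n(m-n)$ and $n(q-n)$ then fall out from the domain dimensions $mn$ and $nq$.

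First, I would set up notation by treating $B$ as an $n\times m$ matrix and $C$ as a $q\times n$ matrix, both of rank $n$. Because $m>n$ and $B$ has full row rank $n$, the matrix $BB^{T}$ is invertible, so $B^{\dagger}:=B^{T}(BB^{T})^{-1}$ is a right inverse: $BB^{\dagger}=I_n$. Dually, $C$ has full column rank $n$, so $C^{T}C$ is invertible and $C^{\dagger}:=(C^{T}C)^{-1}C^{T}$ satisfies $C^{\dagger}C=I_n$. These are the only algebraic facts I need.

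Next I would verify surjectivity directly. For any target $M\in\mathbb{R}^{n\times n}$, set $\mathrm{U}:=B^{\dagger}M\in\mathbb{R}^{m\times n}$; then $\mathcal{L}_B(\mathrm{U})=B\cdot B^{\dagger}M=M$, which shows $\mathcal{L}_B$ is onto $\mathbb{R}^{n\times n}$ and hence has rank $n^2$. By rank-nullity applied to $\mathcal{L}_B:\mathbb{R}^{m\times n}\to\mathbb{R}^{n\times n}$, the nullspace has dimension $mn-n^2=n(m-n)$. The argument for $\mathcal{R}_C$ is symmetric: given $M\in\mathbb{R}^{n\times n}$, set $\mathrm{U}:=MC^{\dagger}$, so $\mathcal{R}_C(\mathrm{U})=MC^{\dagger}C=M$, giving rank $n^2$ and nullity $nq-n^2=n(q-n)$.

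There is no real obstacle; the only thing to be careful about is the convention that $\mathcal{L}_B$ acts by left multiplication on $\mathbb{R}^{m\times n}$ (so $B$ is consuming the first index from the left) and $\mathcal{R}_C$ acts by right multiplication on $\mathbb{R}^{n\times q}$. With that in mind, the nullspace of $\mathcal{L}_B$ can be described intrinsically as $(\ker B)\otimes\mathbb{R}^{n}$, i.e., matrices whose columns lie in $\ker B$, which independently confirms the dimension $(m-n)\cdot n$; an analogous description $\mathbb{R}^{n}\otimes(\mathrm{coker}\,C)$ holds for $\mathcal{R}_C$. I would mention this alternative viewpoint only briefly as a sanity check rather than as the main proof.
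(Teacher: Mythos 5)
Your proof is correct and self-contained. The paper itself does not supply an argument here; it simply refers to Lemmas 3 and 4 of the earlier companion paper (\cite{baillieul2021}) for $\mathcal{L}_B$ and remarks that the case of $\mathcal{R}_C$ is a straightforward modification. What you have done is supply the underlying argument directly: exhibit a right inverse $B^{\dagger}=B^{T}(BB^{T})^{-1}$ (using full row rank of $B$) and a left inverse $C^{\dagger}=(C^{T}C)^{-1}C^{T}$ (using full column rank of $C$), show each lifted operator hits an arbitrary $M\in\mathbb{R}^{n\times n}$ via $B^{\dagger}M$ and $MC^{\dagger}$, and then read off the nullities from rank-nullity on domains of dimension $mn$ and $nq$. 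The paper later uses exactly the same pseudoinverse $A(C^{T}C)^{-1}C^{T}$ as a particular solution in Lemma 2, so your argument is not only correct but consistent with the machinery the paper relies on; you have essentially reconstructed the proof the authors deferred to. Your closing sanity check is also sound: the nullspace of $\mathcal{L}_B$ is precisely the matrices whose columns lie in $\ker B$, giving $(m-n)n$, and the nullspace of $\mathcal{R}_C$ is the matrices whose rows lie in $\ker C^{T}$ (what you denote, up to the standard identification, by $\mathrm{coker}\,C$), giving $n(q-n)$.
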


\begin{proof}
The proof of the statement for ${\cal L}_B$ may be found in \cite{baillieul2021} (Lemmas 3,4), and the proof in the case of ${\cal R}_C$ is a simple modification.
\end{proof}

\bigskip

\begin{lemma} Let the matrices $A,B,C$, appearing in (\ref{eq:jb:linear}) be $n\times n$, $n\times m$, and $q\times n$ respectively with $m,q>n$ and with $B$ and $C$ having full rank $n$.  Then there is an $n(m-n)$ parameter family of solutions $X=\hat A$ to the matrix equation ${\cal L}_B(X)=A$ and a $(q-n)n$ parameter family of solutions $X=\hat A$ to the matrix equation ${\cal R}_C(X)=A$.
\end{lemma}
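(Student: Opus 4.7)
The plan is to derive the lemma directly from Lemma~1 together with the rank--nullity theorem and the elementary fact that the solution set of a consistent linear equation is an affine coset of the kernel. Since the target spaces of ${\cal L}_B$ and ${\cal R}_C$ are both $\mathbb{R}^{n\times n}$ (dimension $n^2$), Lemma~1 says that each lifted operator has rank equal to the dimension of its codomain, so both are \emph{surjective}. In particular, the equations ${\cal L}_B(X)=A$ and ${\cal R}_C(X)=A$ are consistent for every $n\times n$ matrix $A$.

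First I would exhibit a particular solution in each case in order to make surjectivity concrete rather than abstract. Because $B$ has full row rank $n$, the Moore--Penrose pseudoinverse $B^{+}=B^{\top}(BB^{\top})^{-1}\in\mathbb{R}^{m\times n}$ satisfies $BB^{+}=I_n$, so $\hat{A}_0 := B^{+}A\in\mathbb{R}^{m\times n}$ is a particular solution of ${\cal L}_B(X)=A$. Analogously, $C$ has full column rank $n$, so $C^{+}=(C^{\top}C)^{-1}C^{\top}\in\mathbb{R}^{n\times q}$ satisfies $C^{+}C=I_n$ and $\hat{A}_0 := AC^{+}\in\mathbb{R}^{n\times q}$ solves ${\cal R}_C(X)=A$.

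Next I would assemble the parametrized solution set. The general solution of ${\cal L}_B(X)=A$ is $X = B^{+}A + N$ with $N$ ranging over $\ker {\cal L}_B$. By Lemma~1 this kernel has dimension $n(m-n)$, so choosing any basis $\{N_1,\ldots,N_{n(m-n)}\}$ of $\ker{\cal L}_B$ produces
\[
\hat{A}(t_1,\ldots,t_{n(m-n)}) = B^{+}A + \sum_{i=1}^{n(m-n)} t_i N_i,
\]
an $n(m-n)$-parameter family that enumerates \emph{all} solutions. The same construction applied to ${\cal R}_C$, using Lemma~1's nullspace dimension $(q-n)n$, yields the claimed $(q-n)n$-parameter family.

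Since Lemma~1 does all of the heavy lifting, I do not anticipate a genuine obstacle in this proof; the only care required is to verify that the affine-coset description is both complete (every solution is captured) and faithful (distinct parameter vectors give distinct solutions when the $N_i$ form a basis), both of which follow from the standard structure theorem for solution sets of inhomogeneous linear equations.
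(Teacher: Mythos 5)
Your proof is correct and follows essentially the same route as the paper: a particular solution built from a one-sided pseudoinverse (indeed $A(C^TC)^{-1}C^T$ is exactly the particular solution the paper uses for ${\cal R}_C$), plus the affine translate of the nullspace whose dimension is supplied by Lemma~1. The only difference is presentational: the paper cites prior work for the ${\cal L}_B$ case, whereas you spell both cases out explicitly.
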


\begin{proof}
The proof in the case of ${\cal L}_B$ was given in \cite{baillieul2021}.  Let ${\cal N}({\cal R}_C)$ denote the $(q-n)n$-dimensional nullspace of ${\cal R}_C$.  A particular solution to the matrix equation is $A(C^TC)^{-1}C^T$, and any other solution may be written as $A(C^TC)^{-1}C^T+N$, where $N\in{\cal N}({\cal R}_C)$.
\end{proof}


\bigskip
\bigskip

Under the stated assumptions on the dimensions of $A$, $B$, and $C$ in (\ref{eq:jb:linear}), we consider an identity observer of the form
\begin{equation}
\label{eq:jb:observer}
\begin{array}{l}
\dot z(t)=Az(t) + E(y(t) - Cz(t)) + Bu(t),  \ \ \ z\in\mathbb{R}^n\\[0.07in]
\hat y(t)=Cz(t).\end{array}
\end{equation}
To study the effect of observation channel intermittency and dropouts, we introduce the following
notation. Let $q$ be a positive integer and let $[q]$ denote the set of integers $\{1,\dots,q\}$.  Let $[q]\choose k$ be the set of $k$-element subsets of [q], for instance, ${[3]\choose 2} = \left\{\{1,2\},\{1,3\},\{2,3\}\right\}$.  With $I\subset {[q]\choose k}$, Let $P_I$ be a diagonal matrix whose diagonal entries are $0$'s and $1$'s with $1$'s in the positions $I$.  
The goal is to find an observer gain $E$ such that $A$--$EP_IC$ is Hurwitz for all $n\le j\le q$ and $I\in {[q]\choose j}$.  While this goal cannot generally be met for all values of $j$ in the interval $n\le j \le q$ and $I$ ranging over all of  ${[q]\choose j}$, the next result gives conditions under which the closed observer loop matrix $A$--$EP_IC$ is Hurwitz.

\medskip

\begin{definition}\rm
Let $I\in{[q]\choose j}$.  The {\it lattice of index supersets of I} in $[m]$ is given by $\Lambda_I=\{L\subset [q]: I\subset L\}$.
\end{definition}

\medskip

\begin{theorem}
Consider the LTI system (\ref{eq:jb:linear}) and identity observer (\ref{eq:jb:observer}), where the continuing assumption on dimensions is that $m,q,>n$.  Suppose all principal minors of $C$ are nonzero, that $X=\hat A$ is a solution of ${\cal R}(X)=A$, and that for a given $I\in {[q]\choose j}$, $\hat A$ is {\em right-invariant} under $P_I$---i.e. $\hat AP_I=\hat A$.  Then for any real $\alpha>0$ and observer gain $E=\alpha C^T+ \hat A$, the observer error $e=z-x$ with observation channel dropouts given by $y=P_LCx$ satisfies $\dot e = (A-EP_LC) e$ whose solution approaches $0$ exponentially asymptotically for all $L\in\Lambda_I$.
\end{theorem}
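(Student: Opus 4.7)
\emph{Plan.} My strategy is to derive the error equation, exploit the structure of the proposed gain $E = \alpha C^T + \hat A$ to collapse the closed-loop matrix $A - E P_L C$ to a uniformly negative-definite form, and then read off exponential decay. Setting $e = z - x$ and subtracting the plant dynamics from the observer dynamics (with the innovation term computed from the currently active channels through $P_L$), the $Bu$ terms cancel and the drift terms combine to give $\dot e = (A - E P_L C)\,e$, exactly as claimed.

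Substituting $E = \alpha C^T + \hat A$, the closed-loop matrix becomes $A - \alpha C^T P_L C - \hat A P_L C$. The key algebraic identity is $\hat A P_L = \hat A$ for every $L \in \Lambda_I$: since $I \subset L$, the $\{0,1\}$-valued diagonal matrices satisfy $P_I P_L = P_I$, so the hypothesis $\hat A P_I = \hat A$ gives $\hat A P_L = \hat A P_I P_L = \hat A P_I = \hat A$. Combining this with $\hat A C = A$ (the content of ${\cal R}_C(\hat A) = A$) yields $\hat A P_L C = A$, and the $A$ term cancels to leave
\[
A - E P_L C \;=\; -\alpha\, C^T P_L C.
\]

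It remains to show that $C^T P_L C \succ 0$. Since $P_L = P_L^T = P_L^2$, we can write $C^T P_L C = (P_L C)^T (P_L C)$, which is positive semidefinite and strictly definite iff $P_L C$ has column rank $n$. Interpreting the hypothesis ``all principal minors of $C$ are nonzero'' as the statement that every $n \times n$ submatrix of $C$ is nonsingular, and using $|L| \ge |I| \ge n$, the rows of $C$ indexed by $L$ contain an invertible $n \times n$ block; hence $P_L C$ has full column rank, $-\alpha C^T P_L C \prec 0$ for every $\alpha > 0$, and $e(t) \to 0$ exponentially. The step I expect to be most delicate is the algebraic collapse above: it hinges on the compatibility of the right-invariance prescribed for $I$ with every superset $L \in \Lambda_I$, which is precisely why $\Lambda_I$ is the natural domain of resilience. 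A secondary concern, which really belongs to a preliminary lemma rather than to this proof, is the existence of an $\hat A$ with the required right-invariance; this amounts to intersecting the affine solution set of ${\cal R}_C(X) = A$ from Lemma~2 with the subspace $\{X : X P_I = X\}$ of matrices supported on the columns indexed by $I$.
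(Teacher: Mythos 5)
Your proof is correct and follows essentially the same route as the paper's: you simply inline Lemma~3 rather than cite it, and you make explicit the identity $\hat A P_L = \hat A P_I P_L = \hat A P_I = \hat A$ (via $P_I P_L = P_I$ for $I \subset L$) that the paper leaves implicit when it invokes Lemma~3 with $P_L C$ and $E P_L$ in the roles of $C$ and $E$. The collapse $A - E P_L C = -\alpha C^T P_L C$ and the positive-definiteness argument for $C^T P_L C$ (using $|L| \ge n$ and nonvanishing of the relevant minors of $C$) match the paper's reasoning.
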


\medskip

\noindent The proof will make use of the following.

\begin{lemma}
Consider the LTI system (\ref{eq:jb:linear}) where $A$ is an $n\times n$ real matrix and $C$ is a $q\times n$ real matrix with $q>n$.  Suppose that $C$ has full rank $n$ and that $\hat A$ is an $n\times q$ solution of ${\cal R}_C(\hat A)=A$.  Then for any real $\alpha>0$, the matrix
\[
E=\alpha C^T+\hat A
\]
places the eigenvalues of $A-EC$ in the open left half-plane.
\end{lemma}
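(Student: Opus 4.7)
The plan is to show that, with the prescribed choice of $E$, the closed-loop observer matrix $A-EC$ collapses to a simple, manifestly Hurwitz form, so that the whole proof reduces to a one-line substitution followed by a positivity observation.

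First, I would substitute $E = \alpha C^T + \hat A$ directly into $A - EC$ and expand, obtaining
\begin{equation*}
A - EC \;=\; A - \alpha\, C^T C - \hat A\, C.
\end{equation*}
Since $\hat A$ is assumed to satisfy $\mathcal{R}_C(\hat A) = \hat A C = A$, the first and last terms cancel, leaving
\begin{equation*}
A - EC \;=\; -\alpha\, C^T C.
\end{equation*}
This is the key algebraic step, and it is really the whole content of the lemma: the role of $\hat A$ is exactly to annihilate $A$, isolating a symmetric term whose sign is controllable through $\alpha$.

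Second, I would invoke the hypothesis that $C$ has full column rank $n$, which immediately yields $C^T C \succ 0$: for any nonzero $v \in \mathbb{R}^n$ we have $v^T C^T C v = \|Cv\|^2 > 0$. Hence $-\alpha C^T C$ is symmetric and negative definite for every $\alpha > 0$, and all of its eigenvalues are strictly negative real numbers, which in particular lie in the open left half-plane.

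There is really no substantive obstacle here; the lemma is proved the moment one recognizes the cancellation. The only modeling point worth flagging in writing up the argument is that the choice $E = \alpha C^T + \hat A$ is a parametric family (indexed both by $\alpha > 0$ and by the $(q-n)n$-parameter family of solutions $\hat A$ furnished by the preceding Lemma), and that every member of this family is stabilizing. This parametric freedom is precisely what will subsequently be exploited to arrange the right-invariance property $\hat A P_I = \hat A$ required in the theorem on resilient observers.
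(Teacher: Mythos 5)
Your proof is correct and follows the same route as the paper: substitute $E=\alpha C^T+\hat A$, use $\hat A C=A$ to cancel, and observe that $-\alpha C^T C$ is negative definite by full column rank of $C$. The additional remarks about the parametric freedom in $\hat A$ are accurate but not needed for the lemma itself.
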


\smallskip

\begin{proof} 
The following equalities
\[
A-EC = A -(\alpha C^T + \hat A)C = A-\alpha C^TC -A = -\alpha C^TC
\]
show that $A-EC$ is a negative definite matrix.  Hence its eigenvalues are in the left half-plane.
\end{proof}

\smallskip

{\em Proof of Theorem 1:}  If all principal minors of $C$ are nonzero, then $C^TP_L C$ is positive definite for all $L\in\Lambda_I$.  The theorem then follows from Lemma 3 with $P_LC$ and $EP_L$ playing the roles of $C$ and $E$ in the lemma. \ \ \ \ \ \ \ \ \ \ 
{\small $\blacksquare$}

\medskip

\begin{theorem}
Consider the LTI system (\ref{eq:jb:linear}) under the continuing assumption that $m,q>n$.  Suppose that all principal minors of both $B$ and $C$ are nonzero.  Let $X=\hat A_1$ be a solution of ${\cal L}_B (X)=A$ and $Y=\hat A_2$ be a solution of ${\cal R}_C(Y)=A$.  Suppose further 
\begin{enumerate}
\item that for a given $j\ge n$ and $I_1\in{[m]\choose j}$, $P_{I_1}$ is an $m\times m$ diagonal matrix with $1$'s on the diagonal corresponding to the index set $I_1$ and $0$'s elsewhere, and that $\hat A_1$ is left-invariant under $P_{I_1}$ (i.e. $P_{I_1}\hat A_1 = \hat A_1$);
\item that for some (possibly different) $j\ge n$ and $I_2\in{[q]\choose j}$, $P_{I_2}$ is a $q\times q$ diagonal matrix corresponding in the same way to the index set $I_2$, and that $\hat A_2$ is right-invariant under $P_{I_2}$ (i.e. $\hat A_2  P_{I_2}= \hat A_2$).
\end{enumerate}
Then for any $\alpha_1,\alpha_2>0$, the $m\times n$ gain $K=-\alpha_1 B^T-\hat A_1$ and $n\times q$ matrix $E=\alpha_2 C^T+\hat A_2$ render the observer-based control implementation
\begin{equation}
\begin{array}{lll}
\dot x & = & Ax+BP_{L_1}Kz \\
\dot z & = & (A-EP_{L_2}C)z + EP_{L_2} Cy +BP_{L_1}Kz
\end{array}
\label{eq:jb:observer}
\end{equation}
asymptotically stable for all 
\[
\begin{array}{c}
L_1\in\Lambda_{I_1}= \{L\subset[m]:I_1\subset L\}, \\ 
L_2\in\Lambda_{I_2}= \{L\subset[q]:I_2\subset L\}.\end{array}
\]
\label{thm:jb:observer}
\end{theorem}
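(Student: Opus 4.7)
My plan is to recognize that the closed-loop system in the theorem is a standard separation-principle configuration with output-channel intermittency $P_{L_2}$ and input-channel intermittency $P_{L_1}$, and to reduce stability to the two diagonal blocks that arise after passing to the observer-error coordinate $e = z - x$. The two blocks will then be handled, one by Theorem~1 as already proved, and the other by its feedback-side dual from \cite{Baillieul2}.

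First I would compute $\dot e = \dot z - \dot x$ with $y = Cx$ in the observer equation. The input term $BP_{L_1}Kz$ cancels, and combining the $Ax$ from the plant with the $EP_{L_2}Cx$ and $(A - EP_{L_2}C)z$ from the observer collapses the result to
\[
\dot e \;=\; (A - EP_{L_2}C)(z - x) \;=\; (A - EP_{L_2}C)\,e,
\]
with no $x$ dependence. Substituting $z = x + e$ in the state equation gives $\dot x = (A + BP_{L_1}K)\,x + BP_{L_1}K\,e$, so in $(x,e)$ coordinates the closed-loop matrix is
\[
M_{L_1,L_2} \;=\; \begin{pmatrix} A + BP_{L_1}K & BP_{L_1}K \\ 0 & A - EP_{L_2}C \end{pmatrix},
\]
whose spectrum equals the union of the spectra of its two diagonal blocks; it therefore suffices to show that each block is Hurwitz for every admissible $(L_1, L_2)$.

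Second, I would lift the invariance hypotheses from $I_1, I_2$ to their supersets. The identity $P_{I_1}\hat A_1 = \hat A_1$ says the rows of $\hat A_1$ indexed outside $I_1$ are zero, so $P_{L_1}\hat A_1 = \hat A_1$ for every $L_1 \supset I_1$; dually $\hat A_2 P_{L_2} = \hat A_2$ for every $L_2 \supset I_2$. With these, substituting $E = \alpha_2 C^T + \hat A_2$ and using $\hat A_2 C = A$ gives
\[
A - E P_{L_2} C \;=\; -\alpha_2\, C^T P_{L_2}\, C,
\]
which is precisely the reduction performed in the proof of Theorem~1 and is negative definite because the principal-minor hypothesis on $C$ makes $C^T P_{L_2} C$ positive definite. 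By the dual calculation (the resilient-feedback result of \cite{Baillieul2}), substituting $K = -\alpha_1 B^T - \hat A_1$ and using $B\hat A_1 = A$ yields $A + B P_{L_1} K = -\alpha_1\, B P_{L_1} B^T$, negative definite by the principal-minor hypothesis on $B$. Hence both diagonal blocks of $M_{L_1,L_2}$ are Hurwitz, $M_{L_1,L_2}$ itself is Hurwitz, and the claimed asymptotic stability follows uniformly in $(L_1,L_2) \in \Lambda_{I_1} \times \Lambda_{I_2}$.

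The only delicate step is the lifting of invariance from $P_{I_i}$ to $P_{L_i}$, and that is immediate from the row/column vanishing interpretation of the respective identities; after that each block collapses to a manifestly negative-definite symmetric matrix and no further work is needed. I therefore do not expect a substantive obstacle in carrying out this plan, the real content being the observation that the separation-principle structure is preserved verbatim under simultaneous, independent intermittency in the input and output channels.
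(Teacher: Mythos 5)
Your proposal is correct and follows essentially the same route as the paper: pass to observer-error coordinates $e = z - x$, observe that the closed-loop matrix is block upper triangular, reduce each diagonal block to $-\alpha_1 BP_{L_1}B^T$ and $-\alpha_2 C^T P_{L_2} C$ via the identities $A = B\hat A_1$ and $A = \hat A_2 C$ together with the invariance hypotheses, and conclude negative definiteness from the principal-minor condition. The one point where you are slightly more explicit than the paper is in spelling out that $P_{I_i}$-invariance of $\hat A_i$ lifts to $P_{L_i}$-invariance for every superset $L_i \supset I_i$ (via the vanishing-row/column reading of the identity and the fact that $P_{L_i}P_{I_i} = P_{I_i}$); the paper leaves this implicit in the phrase "together with the assumptions on the projections," so making it explicit is a genuine, if small, improvement in clarity.
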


\begin{proof}
Let $e=z-x$.  The equation (\ref{eq:jb:observer}) may be rewritten as
\begin{equation}
\small
\begin{array}{ll}
\left(\begin{array}{l}
\dot x\\
\dot e\end{array}\right) & = 
\left(\begin{array}{cc}
A+BP_{L_1}K & BP_{L_1}K\\
0 & A-EP_{L_2}C\end{array}\right)
\left(\begin{array}{l}
 x\\
 e\end{array}\right)\\[0.1in]
 & = 
\left(\begin{array}{cc}
B(\hat A_1+ P_{L_1}K) & BP_{L_1}K\\
0 & (\hat A_2 -EP_{L_2})C\end{array}\right)\left(\begin{array}{l}
 x\\
 e\end{array}\right)\\[0.1in]
& =  \left(\begin{array}{cc}
-\alpha_1 BP_{L_1}B^T & BP_{L_1}K\\
0 & -\alpha_2C^TP_{L_2}C\end{array}\right)\left(\begin{array}{l}
 x\\
 e\end{array}\right).
\end{array}
\label{eq:jb:rewrite}
\end{equation}
It is well-known (and easily proven) from linear algebra that the eigenvalues of an upper block triangular matrix are the union of the set of eigenvalues of the diagonal blocks.  Hence, the eigenvalues of the coefficient matrices in (\ref{eq:jb:rewrite}) are the combined eigenvalues of $-\alpha_1 BP_{L_1}B^T$ and $-\alpha_2 C^TP_{L_2}C$.  That these matrices are strictly negative definite follows from the assumption that all principal minors of $B$ and $C$ are nonzero, together with the assumptions on the projections $P_{L_1},P_{L_2}$.  Hence all eigenvalues for all choices of $L_k\in\Lambda_{I_k},  k=1,2$, are in the open left half plane, proving the theorem.
\end{proof}

\bigskip

\section{Emulation problems with vector quantized control and machine learning}
As discussed in \cite{baillieul2021}, a primary motivation for studying what we have called {\em ovecomplete control systems} is to develop a control theory of systems that exhibit the type of resilience that is observed in neurobiology and that involve input and output signals generated by the collective activity of very large numbers of simple elements.  Thus we seek to understand systems where overall function depends on groups of inputs and can be sustained even though some members of the group of inputs fail to operate.  In this context we aim to understand neurobiology-inspired approaches to quantization that will preserve essential features of resilience as characterized in Theorem \ref{thm:jb:observer}.  Our neuro-inspired quantization involves piecewise constant inputs $u$ to (\ref{eq:jb:linear}) taking values in ${\cal U}=\{-1,0,1\}^m$.  While systems with quantized inputs have been studied for many decades ---e.g.\ \cite{Curry},\cite{lewis1963},\cite{liu1992optimal}--- the early work in this area was largely concerned with the effects digital round-off errors on control performance.  Renewed interest in systems with quantized inputs came about in connection with information-based control theory and the so called data-rate theorem that was first articulated around around 1999-2000.  References to much of this work and an overview are provided in \cite{Nair2}.  Most of information-based control has been aimed at characterizing data rates needed in feedback communication loops to achieve classical control objectives such as stability.  The quantizations we consider next are less focused on classical control-theoretic objectives and more concerned with meeting neurobiology-inspired objectives such as learning and emulation, \cite{Colder},\cite{Marder}.  Nevertheless, they may also provide a basis for understanding the data-rate requirements needed for fine tuning feedback responses in classical linear control theory.  The approaches we propose are specifically tailored to overcomplete control systems.

\subsection{Neuro-inspired quantized control of linear systems}
As in the preceding section, we consider $n\times m$ matrices $B$ that modulate inputs $u\in\{-1,0,1\}^m$.  Each such $u$ is called an {\em activation pattern}, and the set of corresponding vectors $Bu\in\mathbb{R}^n$ is call the {\em quantization output alphabet}.  We note that there are $3^m$ possible activation patterns and $\le 3^m$ elements in the corresponding quantization output alphabet.  A specific case in which this inequality is strict arises as follows.
  Let $\vec x_1, . . . \vec x_k$ be $k>n$ unit vectors in the closed positive orthant of $\mathbb{R}^n$ such that no $n$ of them lie in an $n-1$-dimensional subspace of $\mathbb{R}^n$ and such that the standard unit basis vectors $\vec e_i=(0,\dots,1,\dots, 0)$ consisting of zero entries except for a $1$ in the $i$-th place are present in the set.  The columns of $B$ are these $k$ vectors together with their negatives, $-\vec x_i$, $i,=1,\dots, k$.  It is not difficult to see that the set of possible directions $Bu$ that result from $u$ ranging over $u\in\{-1,0,1\}^m$ is greater for systems with such symmetry than it would be if the columns of $B$ were confined to the positive orthant of $\mathbb{R}^n$.  The set of possible directions $Bu$ where $B$ exhibits such symmetry is also greater when we utilize all activation patterns in $\{-1,0,1\}^m$ than it would be with the same $B$ but activation patterns restricted to binary inputs $u\in \{0,1\}^m$.  To fix ideas, for each $n$, let $B$ be the $n\times 2n$ matrix whose columns are the standard basis vectors and their negatives: $\vec e_i,-\vec e_i$.  Fig.\ \ref{fig:jb:activate} illustrates this setup in the simple case $n=2$ and
\begin{equation}
B=\left(\begin{array}{cccc}
1 & 0 & -1 & 0\\
0 & 1 & 0 & -1\end{array}\right).  
\label{eq:jb:simpleB}
\end{equation}

\begin{figure}
\begin{center}
\includegraphics[scale=0.4]{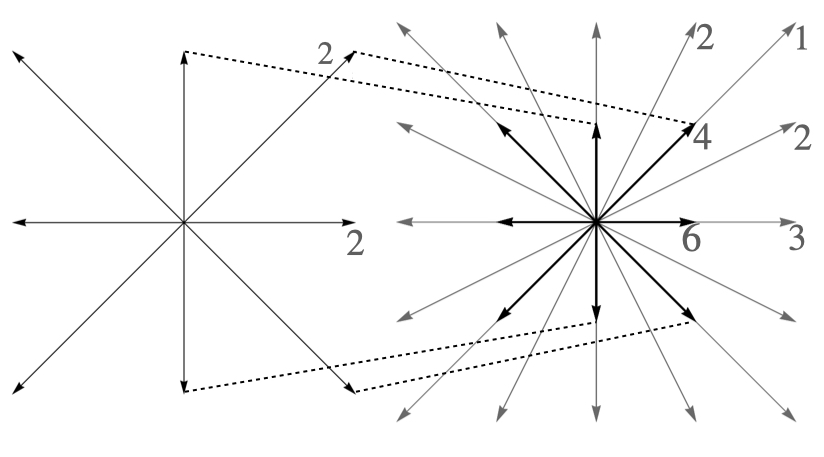}
\end{center}
\caption{The figure illustrates the vectors $Bu$ where $B$ is given by (\ref{eq:jb:simpleB}).  
On the left, the 16 activation patterns are $u\in\{0,1\}^4$ and on the right there are 81 activation patterns $u\in\{-1,0,1\}^4$.  The small numbers indicate the multiplicities of activation patterns that produce each vector direction.  In the figure on the right, there is a greater variety of both vector directions and vector lengths.}
\label{fig:jb:activate}
\end{figure}

We note that in all cases represented in the figure, each vector direction is produced by a multiplicity of activation patterns.  When $u\in\{0,1\}^4$, 16 activation patterns produce eight possible directions, whereas when $u\in\{-1,0,1\}^4$ the 81 activation patterns produce 25 vectors $Bu$.  A simple combinatorial counting argument proves the following.
\medskip
\begin{proposition}
Given a positive integer $n$, define the $n\times 2n$ matrix $B=(\vec e_1,\cdots,\vec e_n, -\vec e_1,\cdots,-\vec e_n)$.  As the $2n$-dimensional activation patterns $u$ range over the $3^{2n}$ element set $\{-1,0,1\}^{2n}$, $5^n$ distinct vector directions are produced.
\end{proposition}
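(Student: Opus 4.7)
The plan is to exploit the block structure of $B=(I_n\ |\ -I_n)$, which makes the components of $Bu$ decouple.

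First, I would split any activation pattern $u\in\{-1,0,1\}^{2n}$ into two halves $u=(v,w)$ with $v,w\in\{-1,0,1\}^n$. From the definition of $B$, a direct computation gives $Bu = v - w$, so the $i$-th coordinate of $Bu$ is simply $v_i-w_i$. Thus the map $u\mapsto Bu$ factors as an $n$-fold Cartesian product of independent one-dimensional maps $(v_i,w_i)\mapsto v_i-w_i$.

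Next, I would count the image of each one-dimensional map. For $v_i,w_i\in\{-1,0,1\}$, the difference $v_i-w_i$ ranges over the five values $\{-2,-1,0,1,2\}$, and each value is achieved by at least one pair (for instance $2=1-(-1)$, $1=1-0$, $0=0-0$, $-1=0-1$, $-2=-1-1$). Hence each coordinate of $Bu$ independently realizes exactly five distinct values.

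Finally, because the coordinates decouple, the image of the full map is the Cartesian product of $n$ copies of $\{-2,-1,0,1,2\}$, which has cardinality $5^n$. There is no real obstacle here; the only thing to verify carefully is the decoupling, which follows immediately from the block form of $B$. As a sanity check, for $n=2$ the formula gives $5^2=25$, matching the count in Figure~\ref{fig:jb:activate}.
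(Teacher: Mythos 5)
Your proof is correct and is essentially the ``simple combinatorial counting argument'' the paper alludes to without spelling out: the block form $B=(I_n\mid -I_n)$ makes $Bu=v-w$ with $v,w\in\{-1,0,1\}^n$, each coordinate $v_i-w_i$ independently realizes all of $\{-2,-1,0,1,2\}$, and the image is exactly $\{-2,-1,0,1,2\}^n$ of cardinality $5^n$. One small point worth making explicit: the proposition's phrase ``vector directions'' is used loosely by the paper to mean distinct vectors $Bu$ (including the zero vector), not normalized directions---your sanity check $5^2=25$ matches the paper's count of ``25 vectors $Bu$'' in the $n=2$ case, confirming this reading.
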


\medskip

The multiplicity of activation patterns that give rise to each vector direction in this proposition grows like $(9/5)^n$.  By adding more input channels such that $m>2n$, the multiplicities of activation patterns giving rise to each $Bu$ grows even faster.  While careful addition of more channels afford the possibility of using piecewise constant quantized control to better approximate continuous systems of the form (\ref{eq:jb:linear}) and also may enhance resilience by virtue of the redundancy in activation patterns, there will at the same time be increased complexity in the algorithms for learning optimal quantized control that will be considered next.

\subsection{Quantized control emulation problems}

As proposed in \cite{baillieul2021}, we formulate the emulation problem of quantized systems as follows.  Let $H$ be an $n\times n$ matrix and consider solutions to the linear ordinary differential equation (ODE) $\dot x = Hx$, $x(0)=x_0$.  The goal of the emulation problem is to find piecewise constant quantized inputs to (\ref{eq:jb:linear}) with sampling interval $h>0$ such that the resulting trajectories of (\ref{eq:jb:linear}) with initial state $x(0)=x_0$ approximate $e^{Hh}x_0$.  The matrix $H$ could be thought of as specifying a goal behavior of a closed loop version of (\ref{eq:jb:linear}), say of the form $H=A+BK$ where $K$ is a stabilizing gain chosen as in Section II.  As outlined in \cite{baillieul2021}, the general problem of emulating trajectories $e^{Ht}x_0$ by trajectories of (\ref{eq:jb:linear}) with quantized inputs may be stated as follows.

\noindent $\bullet$ {\bf General emulation problem} Find a partition of the state space $\{U_i\,:\, \cup\  U_i = \mathbb{R}^n;\ \ U_i^o\cap U_j^o=\emptyset;\ U_i^o={\rm interior}\ U_i\}$ and a selection rule for assigning values  of the input at the $k$-th time step to be $u(k)\in{\cal U}=\{-1,0,1\}^m$, so that for each $x\in U_i$, $Ax + Bu(k)$ is as close as possible to $Hx$ (in an appropriate metric).

Exploiting the properties of linear vector fields, we note that because ode's of the form $\dot x=Hx$ are homogeneous of degree 1, many qualitative features of such systems are determined by evaluating the vector field on the unit sphere in $\mathbb{R}^n$.  In other words, the set of all possible directions taken on by  $Hx$ with $x\in\mathbb{R}^n$ coincides with the set of all directions of $Hx$ with $x\in S^{n-1}$.  Except for scaling, the geometry of the vector field is determined by evaluating it on $S^{n-1}$.   This suggests the following:

\noindent $\bullet$ {\bf Restricted emulation problem} Find a partition of the unit sphere $\{U_i\,:\, \cup\  U_i = S^{n-1}\ \ U_i^o\cap U_j^o=\emptyset;\ U_i^o={\rm interior}\ U_i\}$ and a selection rule $F(x)$ for assigning values $F(x)=u\in{\cal U}=\{-1,0,1\}^m$ such that for each $x\in U_i$ $BF(x)$ is as close as possible to $Hx$ in an appropriate metric.

Formulated in this way, the emulation problems call for metrics.  For the restricted emulation problem, we need to compare vector fields defined on spheres.  Note, that in general the vector fields are neither tangent nor normal; for the given $H$ and sampling interval $h$, they are either of the form $e^{Hh} x_0$ or its first order approximation $(I+Hh)x_0$ with $x_0\in S^{n-1}$.  The quantized system that will  approximate these is $x_0 + hBu$ where $u\in{\cal U}=\{-1,0,1\}^m$.  Here, we are assuming $A=0$. The case $A\ne 0$ and related problems of  stabilization will be treated elsewhere.  The emulation problem is to optimally select activation patterns associated with each $x_0$ so that  $x_0 + hBu$ is ``as close as possible'' to $e^{Hh}\,x_0$, and it remains to define what is meant by ``close''.  Given that we are comparing vector quantities, one possible metric that compares both magnitudes and directions is
\begin{equation}
\left<(Hh)x_0,hBu\right> - wt\,\big|\| Hh x_0\| -\|hBu\|\big|,
\label{eq:jb:composite}
\end{equation}
where $\left< \cdot , \cdot \right>$ denotes the standard Euclidean inner product, and  the weight $wt$ is chosen to reflect the emphasis placed on magnitude vs. direction.  Simple iterative learning may be used to define an appropriate selection rule to optimize such a metric.  To carry this out, the following simple lemma will be useful.
\smallskip

\begin{lemma}
	For $k=1,..,K$, let $G(u_k,x_0)$ be functions taking nonnegative values over a domain of interest. For each $x_0$ in the domain, the index $k = k_{opt}$ of the largest value $G(u_k,x_0)$ may be found by the simple iterative {\bf Algorithm 1}.
\begin{algorithm}
\begin{algorithmic}[1]
	\caption{Iterative learning algorithm for emulation}\label{alg:Iterative_alg}
	
	\State Initialize weights $m_k(0,x_0) = 1$ for $k = 1,2, . . . ,K$ and $\alpha\ge0$;
	
	\Repeat
		\For{$k \gets 1$ to $K$}                    
			\State {$m_k(j+1,x_0)$ =$ \frac{m_k(j,x_0)[\alpha+G(u_k,x_0)]}{\sum_{l=1}^{K}m_l(j,x_0)[\alpha+G_l(x_0)]}$}
		\EndFor
		
	\Until{$j=N$, where for some $k_{opt}$, $m_{k_{opt}}(N,x_0)\gg m_k(N,x_0)$ for all $k\neq k_{opt}$}
	\State Output: activation pattern $k_{opt}$.
\end{algorithmic}
\end{algorithm}
\label{lemma:jb:algorithm}
\end{lemma}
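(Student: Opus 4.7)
The plan is to observe that the multiplicative update in Step 4 of Algorithm~1 has a closed form that makes the convergence statement essentially immediate. Fixing $x_0$ and writing $g_k := \alpha + G(u_k,x_0)$ (which is strictly positive provided $\alpha>0$, and nonnegative in general), the recursion becomes $m_k(j+1) = m_k(j)\,g_k / S(j)$ where $S(j) = \sum_{l=1}^K m_l(j)\,g_l$ is the same normalizing scalar for every index $k$. Because $S(j)$ does not depend on $k$, after $j$ iterations one has $m_k(j) = m_k(0)\,g_k^{\,j}/\bigl(\prod_{i=0}^{j-1} S(i)\bigr)$, and since the weights are normalized to sum to $1$ (as they do at initialization, with $m_k(0)=1$ rescaled by $K$, or alternatively by noting that the denominator at each step enforces the simplex constraint), this collapses to
\begin{equation}
m_k(j,x_0) \;=\; \frac{g_k^{\,j}}{\sum_{l=1}^{K} g_l^{\,j}}.
\label{eq:jb:mult-closed}
\end{equation}
I would verify \eqref{eq:jb:mult-closed} by a one-line induction on $j$.

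Given \eqref{eq:jb:mult-closed}, the second and only substantive step is to read off the asymptotic behavior. Let $k_{opt}$ be the (for simplicity, unique) index maximizing $G(u_k,x_0)$, equivalently maximizing $g_k$. For any $k\ne k_{opt}$,
\begin{equation}
\frac{m_k(j,x_0)}{m_{k_{opt}}(j,x_0)} \;=\; \left(\frac{g_k}{g_{k_{opt}}}\right)^{\!j},
\label{eq:jb:ratio}
\end{equation}
and since $g_k/g_{k_{opt}}<1$, this ratio decays geometrically to $0$. Consequently $m_{k_{opt}}(j,x_0)\to 1$ and $m_k(j,x_0)\to 0$ for $k\ne k_{opt}$, so for any prescribed separation tolerance there exists a finite $N$ with $m_{k_{opt}}(N,x_0)\gg m_k(N,x_0)$ for all $k\ne k_{opt}$. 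The role of the offset $\alpha>0$ is simply to keep all $g_k$ strictly positive so that \eqref{eq:jb:mult-closed} and \eqref{eq:jb:ratio} are well defined even if $G(u_k,x_0)$ happens to vanish at some indices.

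There is no real obstacle here: the ``hard part'' is recognizing that the update is the standard replicator/exponential-weights step in disguise, so that the iterates coincide with normalized $j$-th powers of the scores and convergence follows from a geometric-decay estimate. I would also add a brief remark that if the maximizer is not unique then the weights concentrate uniformly on the argmax set, and that the rate of separation in \eqref{eq:jb:ratio} is governed by the multiplicative gap $g_{k_{opt}}/\max_{k\ne k_{opt}} g_k$, which is why the algorithm terminates in a number of iterations that depends only on this gap and the desired dominance threshold.
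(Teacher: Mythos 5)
Your proof is correct, and it takes a genuinely cleaner route than the paper's. The key observation you make---that the normalizing denominator $S(j)=\sum_l m_l(j)g_l$ is the same for every index $k$, so the recursion telescopes to the closed form $m_k(j,x_0)=g_k^{\,j}/\sum_l g_l^{\,j}$ for $j\ge 1$ (a one-line induction, exactly as you say)---does not appear in the paper. The paper instead argues more computationally: setting $\alpha=0$, it names the smallest score $G_{k_s}$, puts $p=m_{k_s}(1,x_0)$, asserts $m_{k_s}(2,x_0)=p^2/(p^2+(1-p)^2)$ (a formula that is exact only when $K=2$) and then estimates $m_{k_s}(j,x_0)=p^j+\mathcal{O}(p^{j+1})$, iterating this reasoning up the ordered list of weights to conclude that all but the largest go to zero; the case $\alpha>0$ is then treated as a separate, slower regime with details deferred. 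Your closed form buys several things at once: it handles $\alpha=0$ and $\alpha>0$ uniformly, makes the exact geometric rate $\bigl(g_k/g_{k_{opt}}\bigr)^j=\bigl((\alpha+G_k)/(\alpha+G_{k_{opt}})\bigr)^j$ visible (so the paper's remark that large $\alpha$ slows convergence falls out for free, since this ratio tends to $1$ as $\alpha\to\infty$), works for all $K$ without the implicit two-alternative simplification, and gives the tie-breaking behavior (uniform concentration on the argmax set) as an immediate corollary. One small inaccuracy worth fixing: the algorithm initializes $m_k(0,x_0)=1$, not $1/K$, so the weights do not sum to $1$ at $j=0$; the closed form and the simplex constraint hold only for $j\ge 1$, after the first normalization---your parenthetical ``rescaled by $K$'' slightly muddies this, though you also note correctly that the denominator enforces the constraint after one step, which is what actually matters for the convergence claim.
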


\begin{proof}
Suppose first $\alpha=0$.  Because of the algorithm's initialization, the quantities $m_k(1,x_0)$ are nonnegative and their sum from $k=1$ to $k=K$ is equal to $1$.  Let $G_{k_s}(x_0)$ be the smallest value of the $G(u_k,x_0)$'s.  Let $p=m_{k_s}(1,x_0)$ and $q= \sum_{k\ne k_s}m_k(1,x_0)$.  It follows from 
the iteration formula in the algorithm that $p+q=1$.  We can explicitly evaluate $m_{k_s}(2,x_0)=p^2/(p^2+(1-p)^2)$ and more generally $m_{k_s}(j,x_0)=p^j+{\cal O}(p^{j+1})$.  Noting that if the values $G(u_k,x_0)$ are ordered---say $G_1(x_0)<G(u_2,x_0)<\dots<G(u_K,x_0)$ (i.e.\ $k_s = 1$), then for all $j\ge 1$, $m_1(j,x_0)<\dots<m_K(j,x_0)$.  Under this ordering, let the second smallest element $m_2(j,x_0)=\delta$ where $0\le\delta<1$.  Continuing to iterate, we find that $m_2(j,x_0)\sim \delta^{{\cal O}(j)}+{\cal O}(\delta^{{\cal O}(j)})$.  Hence, $\lim_{j\to\infty} m_2(j,x_0) = 0$, and continuing in this way, it is established that $\lim_{j\to\infty}m_k(j,x_0) = 0$ for all $k<K$.  Because of the renormalization at each step, it also follows that $\lim_{j\to\infty}m_K(j,x_0) = 1$ (i.e.\ the weights associated to the largest value $G(u_k,x_0)$ approach 1).

In the case $\alpha>0$, the stated result remains true, but the speed of convergence of the weights $m_k(j,x_0)$  is significantly reduced.  The case $\alpha=1$ will be of interest in what follows, and for this, one can show that $m_{k_s}(j,x_0)$ is a rational function of $p=m_{k_s}(1,x_0)$ whose numerator grows like $1+(j-1)q+o(q)$ and whose denominator grows like $2^{j-1}+1 +o(q)$.  Details will appear in a longer version of this paper.
\end{proof}

\smallskip

\begin{remark} 
With $\alpha=1$ in 
{\bf Algorithm 1}, the rate of convergence of the weights $m_k(j,x_0)$ as $j\to\infty$ is slower than when $\alpha=0$.  As a consequence, the convergence as $j\to\infty$ of weights associated with the second largest value $G_i(j,x_0)$ will converge to zero more slowly than in the case $\alpha=0$, and much more slowly than the weights $m_k(j,x_0)$ for $k\ne i$.  We omit details of the analysis, but in terms of our quantization problem, it is found that near boundaries of cell $U_i$ in the partition where there is a change in optimal activation patterns, the weight orderings produced by the algorithm anticipate the switch that occurs at the boundary.
\end{remark}

\begin{remark}
The Appendix contains a second algorithm {\bf Algorithm 2}, that optimally selects activation patterns by minimizing a loss function using a neural network and Q-learning.  For the noise-free problems under consideration, the algorithms yield identical results.  The algorithm will be taken up in more detail in an expanded version of this work.
\end{remark}

\section{Quantization Resilience and Complexity}

The approach to quantization in the emulation problem of the preceding section is more complex than the methods used to establish the data-rate theorem, \cite{Nair2}.  Nevertheless, it enables examination of a wider range of the information processing demands of performance requirements that are more exacting than mere stability.  To explore this, we continue with input-modulating matrices $B$ having more columns than rows ($n\times m$ with $m>n$).  As $x$ ranges over $S^{n-1}$, let $F(x)$ be a selection function that assigns as activation pattern to each $x$ in accordance with our learning algorithms..  Let ${\cal V}\subset {\cal U}=\{-1,0,1\}^m$ be the subset of activation patterns that are assigned to at least one point in $S^{n-1}$ by Lemma \ref{lemma:jb:algorithm}.  Because there can be multiple optima associated with any $x\in S^{n-1}$, (as illustrated in Fig.\ \ref{fig:jb:fig2}), the number of possible elements in the {\em quantizaton output alphabet} may be strictly less than the cardinality $|{\cal V}|$.  Let $\{d_1,\dots,d_K\}$ be the direction vectors of this alphabet, and for each $i$, let
\[
D^{-1}(d_i)=\{v\in{\cal V}:Bv=d_i\}\ \ {\rm and}\ \ \alpha_i=|D^{-1}(d_i)|,
\]
the cardinality.

\begin{figure}[ht]
\begin{center}
    \includegraphics[scale=0.27]{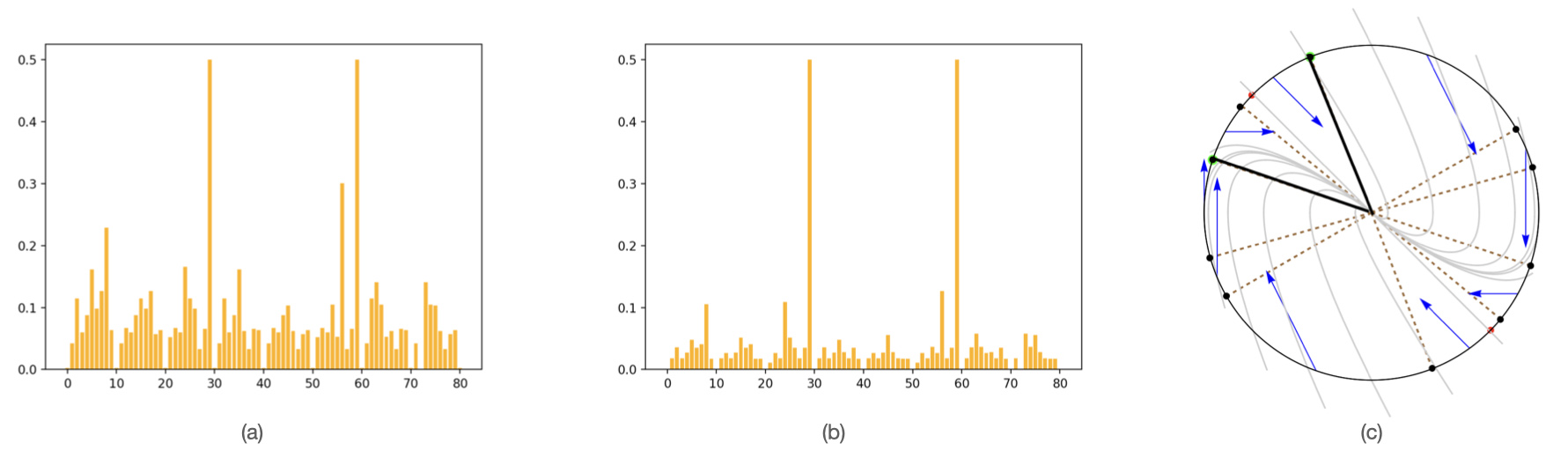}
\caption{We use Algorithms 1 and 2 to learn optimal activation patterns for quantized approximation of $Hhx_0$ where
$H=\left(\begin{array}{cc}
0 & 1\\
-1 & -2\end{array}\right)$
in $\mathbb{R}^2$.  Figures (a),(b) show successive magnitudes of the value functions for a particular point (
$x=(0,1)\in\mathbb{R}^2$.  The two peaks indicate that the algorithms are learning that two activation patterns ($u=(1,-1,0,1),(0,-1,-1,1)$) corresponding to output directions $Bu=(1,-2)$ are 
equally good.  Panel (c) shows the trajectories of the linear vector field (gray) and the ten pie-shaped regions where different activation patterns and quantization directions corresponding to $B$ in (\ref{eq:jb:simpleB}) (Table I) are active.
}
\end{center}
\label{fig:jb:fig2}
\end{figure}

We assume that the selection function $F(\cdot)$ selects activation patterns in a consistent way so that as $x$ ranges over $S^{n-1}$, $F$ selects only one element from each $D^{-1}(d_i)$, $i=1,\dots,K$.  Given such an $F$, we define the partition $U_i=\{x\in S^{n-1}:BF(x)=d_i\}$.  In terms of this partition and a refinement to be introduced next we address the complexity of solutions to the {\em restricted emulation problem}.  Taking inspiration from Shannon's early work on entropy and redundancy in written language, \cite{shannon1951prediction}, we offer the following.  Recalling that the geometric {\em content} (generalized surface area) of the unit sphere $S^{n-1}$ is $2\pi^{\frac{n} {2}}/\Gamma(\frac{n}{2})$, and letting $\nu$ be the corresponding measure so that
\[
\int_{S^{n-1}}\,d\nu = \frac{2\pi^{\frac{n} {2}}}{\Gamma(\frac{n}{2})},\ \ {\rm let} \ \ p_i=\int_{U_i}\,d\nu/\int_{S^{n-1}}\,d\nu.
\]
\begin{definition}
The {\em quantization output alphabet entropy} is given by the formula $-\sum_{i=1}^K p_i\log_2 p_i$.
\end{definition}
\smallskip
\begin{definition}
With $\alpha_i=$ the cardinality of the set of activation patterns mapped by $B$ to $d_i$ as above, let $q_i=\frac{1}{\alpha_i}p_i$.  The {\em activation pattern entropy} is then given by 
\[
H({\cal V}) =  
-\sum_{\tiny\begin{array}{c}
i\in[K]\\
v\in D^{-1}(d_i)\end{array}}
q_i \log_2 q_i.
\]
\end{definition}
\smallskip
\begin{proposition}
The following are equivalent formulas for activation pattern entropy:
$H({\cal V})=-\sum_{i=1}^K\alpha_iq_i\log_2 q_i = -\sum_{i=1}^K p_i\log_2 \frac{p_i}{\alpha_i}$.
\end{proposition}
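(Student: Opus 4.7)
The plan is to show both equalities follow directly from the definitions by a short bookkeeping argument, with no analytic input beyond rearranging a sum.

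First I would unpack the double sum in the definition of $H(\mathcal{V})$. The index set runs over pairs $(i,v)$ with $i\in[K]$ and $v\in D^{-1}(d_i)$, but the summand $q_i\log_2 q_i$ depends only on $i$, not on $v$. Since $|D^{-1}(d_i)|=\alpha_i$ by definition, the inner sum over $v\in D^{-1}(d_i)$ simply contributes a factor of $\alpha_i$. This immediately produces
\[
H(\mathcal{V}) = -\sum_{i=1}^{K}\alpha_i q_i \log_2 q_i,
\]
which is the first of the two claimed forms.

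Next I would substitute the defining relation $q_i = p_i/\alpha_i$ into this expression. The prefactor $\alpha_i$ outside the log cancels against the $1/\alpha_i$ in $q_i$, leaving $p_i$, while the argument of the logarithm remains $p_i/\alpha_i$. This yields
\[
-\sum_{i=1}^{K}\alpha_i q_i\log_2 q_i
 = -\sum_{i=1}^{K}\alpha_i\cdot\frac{p_i}{\alpha_i}\log_2\!\frac{p_i}{\alpha_i}
 = -\sum_{i=1}^{K} p_i \log_2\!\frac{p_i}{\alpha_i},
\]
which is the second claimed form, completing the chain of equalities.

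There is really no obstacle here; the only point that merits care is making explicit that the original sum in Definition~4 is a sum over activation patterns, so each distinct direction $d_i$ contributes $\alpha_i$ identical summands. Once this multiplicity is accounted for, both alternative formulas drop out by pure algebra, and the proposition is established.
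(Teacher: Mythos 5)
Your proof is correct and follows the same route as the paper: you count the $\alpha_i$ repeated summands over $v \in D^{-1}(d_i)$ to get the first equality, then substitute $q_i = p_i/\alpha_i$ for the second. The paper's own proof is terser (it only sketches the multiplicity-counting step), so your version is if anything slightly more complete.
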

\begin{proof}
The summation in the definition indicates sums must be taken over what we assume are equally probable activation patterns that correspond to direction $d_i$---which is to say there are $\alpha_i$ copies of $q_i$ in the sum.
\end{proof}

\smallskip

We note that the activation pattern entropy accounts for the information needed to specify which of a multiplicity of activation patterns is selected to specify the members of the quantization output alphabet.  It is always greater than or equal to the quantization output alphabet entropy.
\smallskip

\begin{example}
We return to $B$ in (\ref{eq:jb:simpleB}).  As noted, there are 81 activation patterns (elements of $\{-1,0,1\}^4$) but only 24 nonzero elements in the output alphabet;  this redundancy is characterized in detail in Fig.\ \ref{fig:jb:activate}.   We consider several approaches to the problem of emulating the linear vector field $Hx$ on the unit circle, with $H=
\left(\begin{array}{cc}
0 & 1\\
-1 & -2\end{array}\right)$.  
While direct combinatorial optimization is perhaps the most straightforward approach, we choose instead to apply learning algorithms {\bf Algorithm 1} given in Lemma \ref{lemma:jb:algorithm} and {\bf Algorithm 2} given in the Appendix.  This approach is aligned with our aim of exploring ideas suggested by principles of neuroscience.  The approach also affords the opportunity to examine the complexity of online {\em relearning} that may be needed to compensate for a channel dropout.  {\bf Algorithm 1} is especially efficient using the metric  (\ref{eq:jb:composite}) with various choices of parameters $h$ and $wt$, as $x$ ranges over the unit circle.   For the sake of simplicity of illustration, we take the simple loss function given by  the norm of the difference between $e^{Hh}x_0$ and $x_0+Bu$, and this is perhaps better suited to {\bf Algorithm 2}.  Note that $Hx$ is nowhere equal to zero for $x\in S^1$, and thus from the set of 81 possible activation patterns associated with matrix $B$ in (\ref{eq:jb:simpleB}), all nine that are mapped to $0$ by $B$ can be eliminated from consideration.  From the remaining set of 72 activation patterns, the only ones that turn out to be minimizers for the loss function on $S^1$ are a subset of 42.  If these activation patterns were i.i.d., the activation pattern entropy would be $log_2{42}\sim 5.39$, but given the way they arise in the learning protocol, they are not i.i.d.  Indeed, they provide optimal discrete approximations as $x$ traverses $S^1$ as described by the selection function defined in Table I.
The algorithm identifies a sub-alphabet of ten (10) output directions that  are used in the emulation.  These, together with their activation pattern multiplicities are listed in the table.  While the alphabet of 72 possible activation patterns contains a subset of 30 that are never used in this example, even the 42 that correspond to optimizers of the emulation metric has redundancies that are seen in the alphabet entropies.  For each direction, $d_i$, $\alpha_i$ is the multiplicity of corresponding activation patterns, and $p_i$ is the proportion of points on the circle where each direction minimizes the loss function.  The quantization output alphabet entropy and activation pattern entropy are thus respectively  
\[
-\sum_{k=1}^{10}p_k\log_2 p_k =  3.052 \ {\rm and} \ -\sum_{k=1}^{10}p_k\log_2 \frac{p_k}{\alpha_k} = 4.746.
\]

\medskip

\begin{table*}[h]
\begin{center}
\tiny
\caption{}
\begin{tabular}{||c||l|l|l|l|l||}
\hline 
$\theta$ &  $-0.321755 \le\theta<0.27167$ & $ 0.27167 < \theta < \frac{\pi}{6} $ & $ \frac{\pi}{6} < \theta <1.94263 $ & $ 1.94263 < \theta < 2.45243 $ & $ 2.45243 < \theta < 2.81984 $  \\ \hline
$f(\theta) $ & $\left(\begin{array}{c} 0 \\ -1 \end{array}\right)$\  $\begin{array}{l} \alpha_1=6 \\ p_1 = 0.094 \end{array}$ & $\left(\begin{array}{c} 0 \\ -2 \end{array}\right)$ \  $\begin{array}{l}\alpha_2 = 3 \\ p_2= 0.04\end{array}$ & $\left(\begin{array}{c} 1 \\ -2 \end{array}\right)$\ \ $\begin{array}{l}\alpha_3 = 2 \\ p_3=0.226 \end{array}$ & $\left(\begin{array}{c} 1 \\ -1 \end{array}\right)$ \  $\begin{array}{l} \alpha_4=4 \\ p_4= 0.08 \end{array}$ & $\left(\begin{array}{c} 1 \\ 0 \end{array}\right)$ \ $\begin{array}{l} \alpha_5=6 \\ p_5 = 0.06\end{array}$ \\ \hline\hline
& $2.81984 < \theta < 3.41326$ & $3.41326 < \theta < 7\pi/6$ &  $7\pi/ 6 < \theta <5.08422 $ & $5.08422 < \theta <5.59402$ & $ 5.59402 < \theta < 5.96143 $ \\ \hline
& $\left(\begin{array}{c} 0 \\ 1 \end{array}\right)$ \ $\begin{array}{l} \alpha_6=6 \\ p_6= 0.094 \end{array}$  & $\left(\begin{array}{c} 0 \\ 2 \end{array}\right)$\  $ \begin{array}{l} \alpha_7=3 \\ p_7 = 0.04 \end{array} $ & $\left(\begin{array}{c} -1 \\ 2 \end{array}\right)$\  $ \begin{array}{l} \alpha_8=2 \\ p_8= 0.226 \end{array} $ & $\left(\begin{array}{c} -1 \\ 1 \end{array}\right)$\  $ \begin{array}{l} \alpha_9=4 \\ p_9= 0.08 \end{array}$ & $\left(\begin{array}{c} -1 \\ 0 \end{array}\right)$\  $ \begin{array}{l} \alpha_{10}=6 \\ p_{10}= 0.06 \end{array}  $  \\ \hline
\end{tabular}
\label{table:jb:one}
\end{center}
\end{table*}

The example and  particular vector field we have chosen to illustrate the approach point to several general considerations in automated learning of quantization alphabets.  For both this example and one that uses the same activation without the activation redundancy (e.g.\ 
$B=\left(\begin{array}{cccc} 1&0&1/\sqrt{2}&1/\sqrt{2} \\
 0&1&1/\sqrt{2}&-1/\sqrt{2}\end{array}\right)$),
the useful activation patterns are considerably smaller than the available sets of patterns or output directions.  In the case of Example 1, the optimized output direction alphabet is such that only ten activation patterns are needed.  (See Table 1 and Fig.\ 2(c).)  Nevertheless to preserve resilience to channel dropouts, having a redundant activation pattern dictionary is useful---as illustrated in Figure 3.  The learning algorithms allow relatively fast online relearning of optimal emulation when any of the four channels drops out.

\begin{figure}[h]
	\begin{center}
		\includegraphics[scale=0.32]{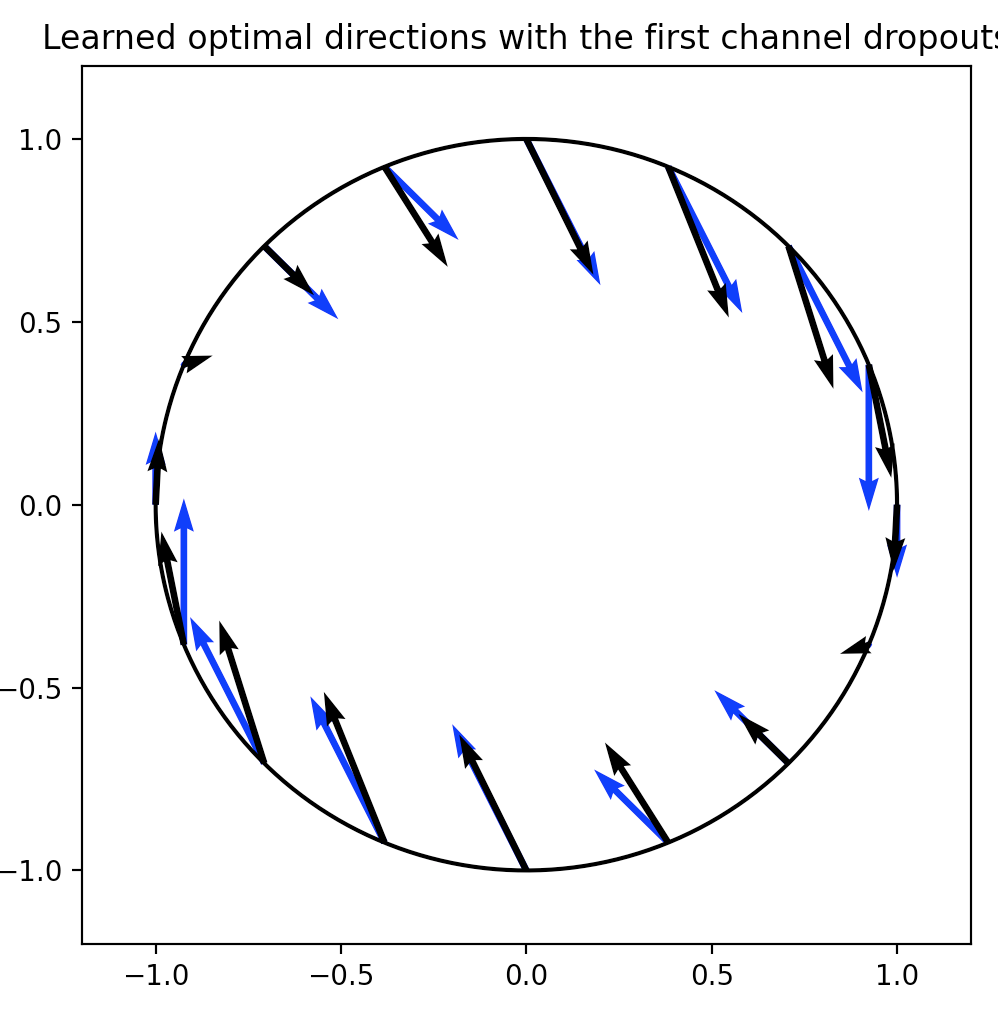}
		\includegraphics[scale=0.322]{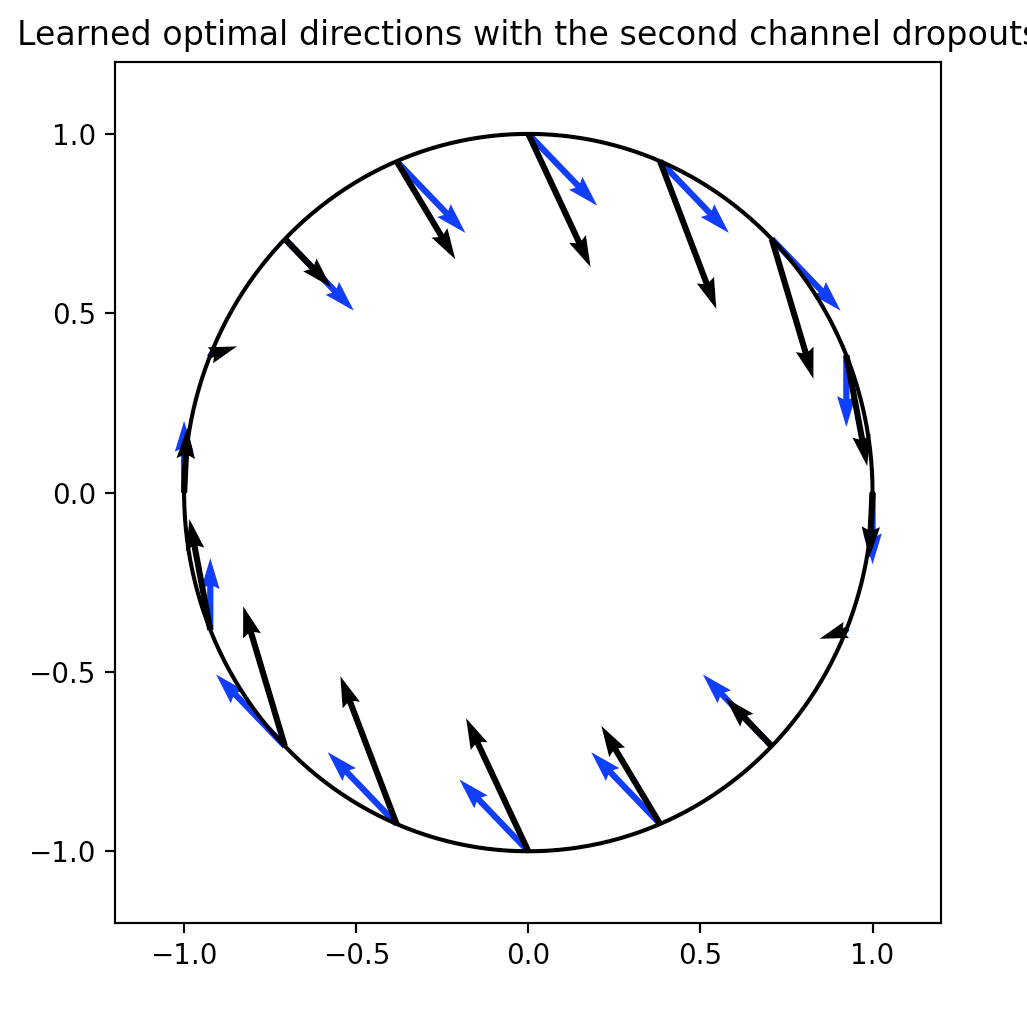}
		\includegraphics[scale=0.315]{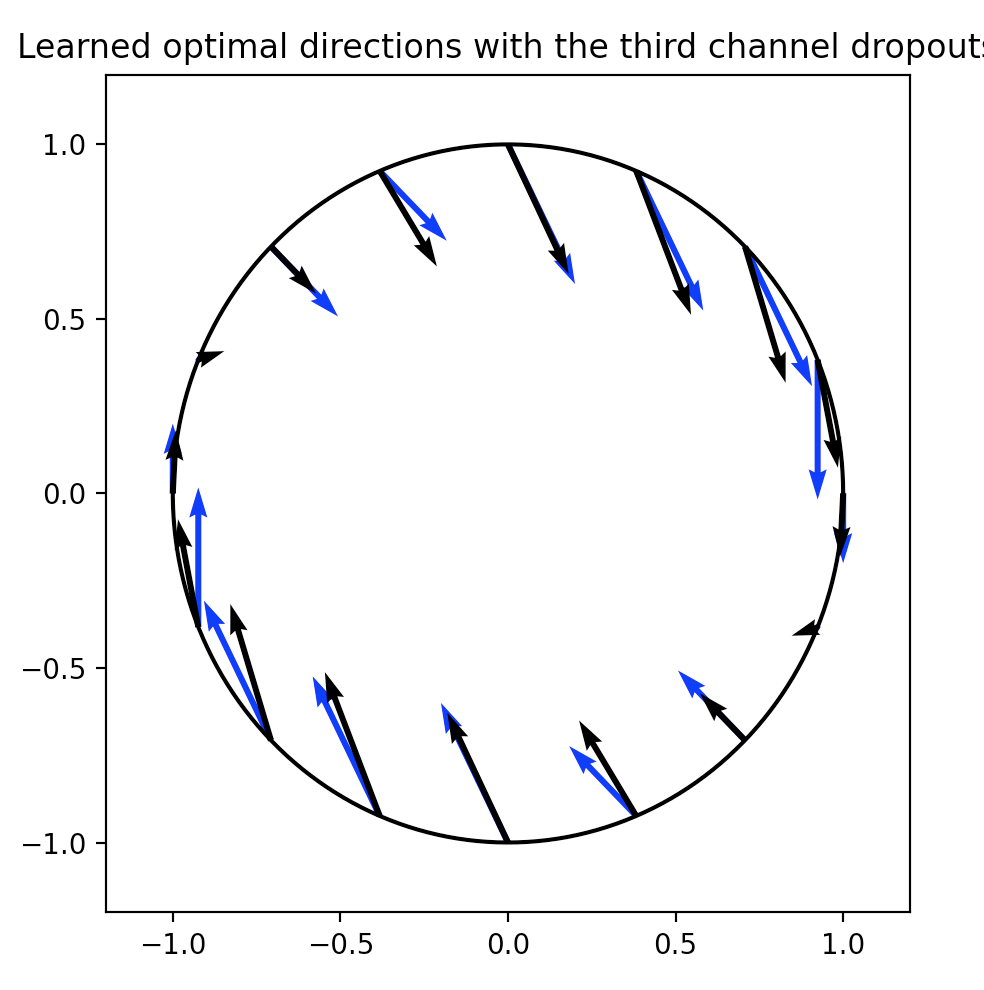}
		\includegraphics[scale=0.32]{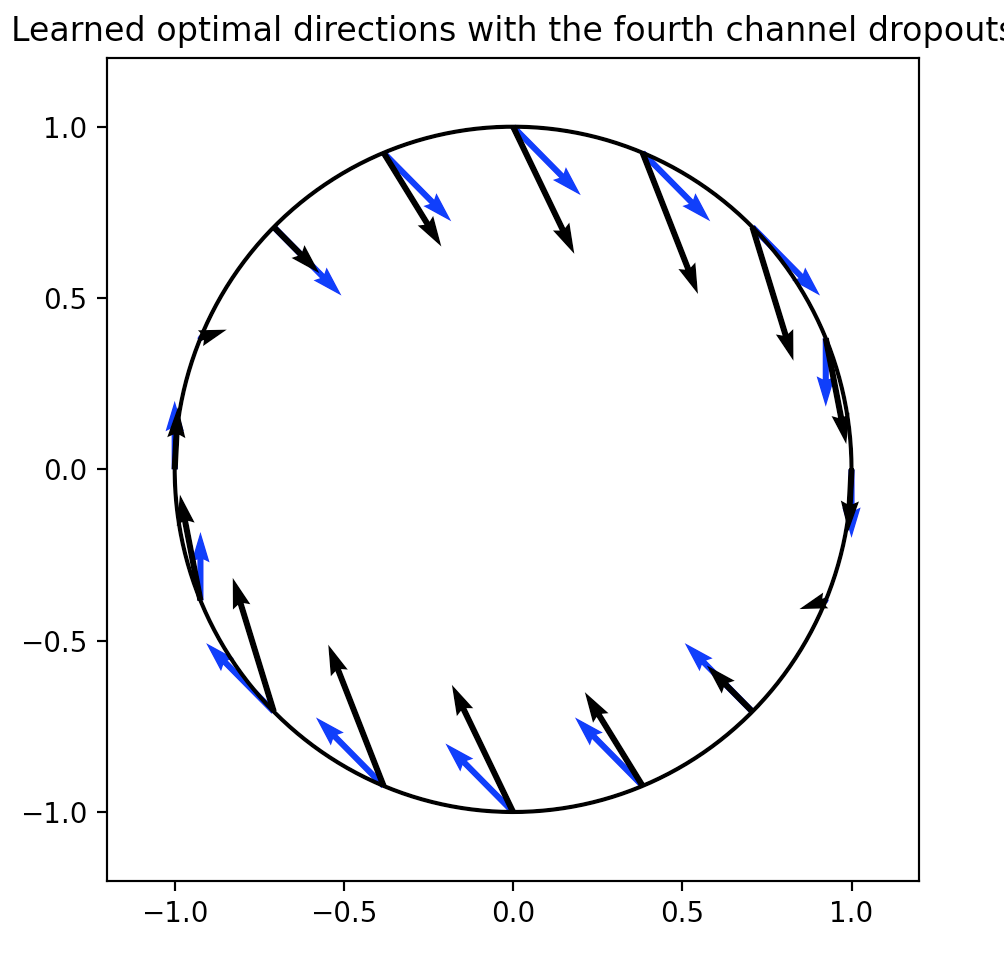}
		\includegraphics[scale=0.335]{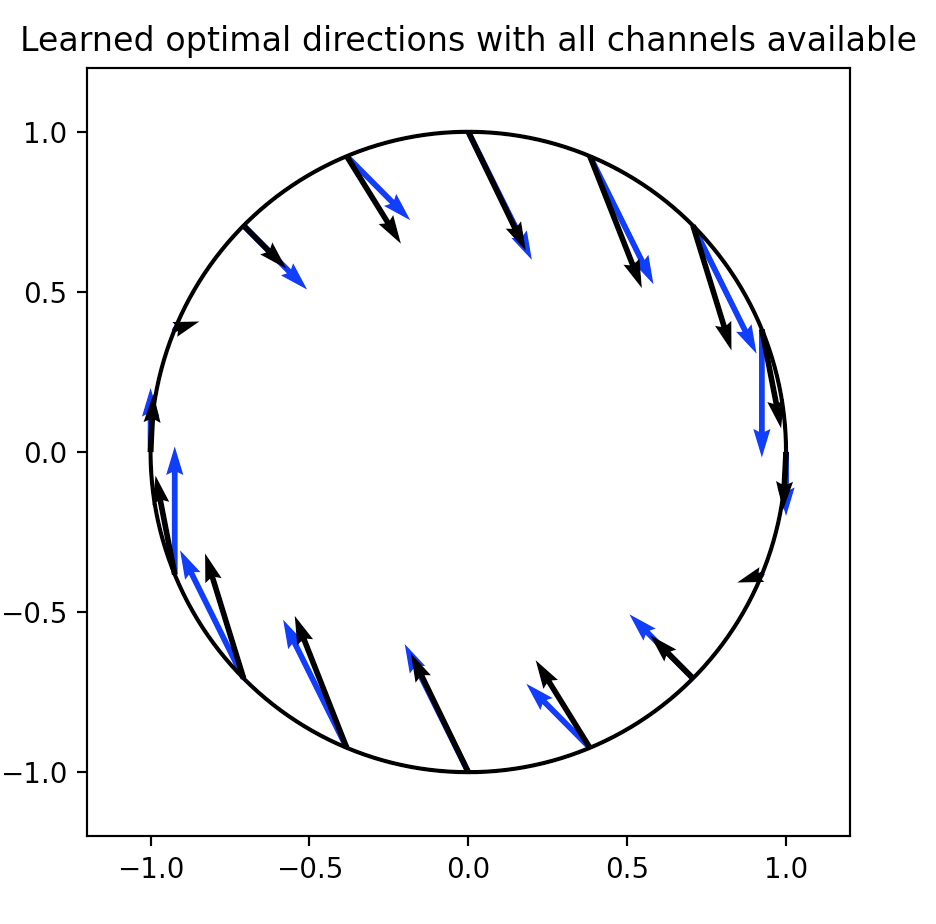}
	\end{center}
	\caption{For the emulation problem of Example 1, These figures show the learned optimal directions change when different channels become unavailable. From left-to right, the emulating direction vectors depicted are learned with respectively  channels 1,2,3 and 4 being unavailable and the bottom right image depicting learning with all channels operational.  Surprisingly, with channels 1 or 3 not available, the directions learned are the same as with all channels present.
}	\label{fig:ex}
\end{figure}

\end{example}

\smallskip


\bigskip

\section{Conclusions and future work}

The aim of the paper has been to describe the tools and some key concepts emerging from our ongoing investigation of control system models having very large numbers of input and output channels.  The idea of quantization using dictionaries of activation patterns and related quantization output alphabets has been introduced.  After how the complexity of such models scales with dimensions of the inputs, outputs, and state, a low dimensional example shows that these dictionaries and alphabets will typically exhibit considerable redundancy---not unlike written English and the Roman alphabet \cite{shannon1951prediction}.  Our approach has made use of the fact that autonomous (time-invariant) linear ordinary differential equations have a certain length-scale invariance, thus key features can be studied restricting attention to the surface of the unit sphere.  Part of future work will be to understand similar principled quantization for nonlinear systems.  Even in the linear case, however, there are interesting open questions that are illustrated by the examples.  The trajectories of the vector field in Fig.\ 2(c) approach the origin through ever narrower channels (one of which is confined to the dark-bordered pie wedge).  It seems likely that effective quantization for these systems will need different levels of coarseness as the asymptotic limits are appraoched.  We conjecture that a theory of trajectory-aware quantization---where there is refinement as time passes---may benefit from recent work of Molloy and Nair on what they call ``smoother entroy'' \cite{Nair21}.  Finally, a caveat.  The methods and approach we have described are  not (yet at least) being proposed for actual applications.  The aim is simply to explore information processing concepts.






\begin{appendix}

\begin{breakablealgorithm}
	
		\caption{Learning activation patterns for linear system vector field}\label{alg:cap2}
		\begin{algorithmic}[1]
		\State Input: Activation patterns $U=\{u_1,u_2,...,u_K\}$ and its direction vectors of quantization output alphabet to form an action space $Dir=\{d_1,d_2,...,d_K\}$, a starting point $x_0$, a learning metric $G(u_k,x_0)$;
		\State Initialize replay memory $D$ with capacity $N$;
		\State Initialize action-value $Q$ function with parameter $\phi$  and target-$Q$ function with parameter $\hat\phi=\phi $, threshold $\kappa= G(u_k,x_0)$ for an arbitrary activation pattern $u_k$ whose direction  is $d_k=Bu_k$;
		\While{True}
		\Repeat
		\State Start from the initial point $x_0$ for the quantized system;
		\Repeat 
		\State Choose direction $d_t=\pi^\epsilon$, get reward $r_t=-G(u_t,x_0)$ and new state $x_{t+1}$ at time t;
		\State Store $(x_t, d_t, r_t,x_{t+1})$ as a memory cue to $D$;
		\State Sample random minibatch of cues  $(x_j, d_j, r_j,x_{j+1}),j\in[0,t]$ from $D$;
		\State $y=\Big\{
		\begin{array}{ll}
			r_j, \ \ \ if \ x_{j+1}\ is\ a\ terminal\ state &\\
			r_j+\gamma max_{d'}\ Q_{\hat\phi}( x_{j+1},d') \ \ otherwise&
		\end{array}$
		\State Applying loss function $(y- Q_{\phi}( x_j,d_j)^2$ to train $Q$ network;
		\State $x_t\gets  x_{t+1}$;
		\State Every $T$ steps,  $\hat\phi\gets\phi$;
		\Until{$G(u_t,x_0)<\kappa$}
		\Until{for the given $x_0$, form convergent paths}
		\If {the convergent direction sequence contains more than one direction}			
		\State	terminate the while loop
		\ElsIf{the convergent direction sequence contains exactly one direction $d$ generated by $u$}
		\State $\kappa=G(u,x_0)$;
		\State $d_0=d$;
		\EndIf	
		\EndWhile
		
		\State Output: Optimal direction $d_0$ and its correspondence activation pattern ${u_{opt}}$.
	\end{algorithmic}
	
\end{breakablealgorithm}

\end{appendix}

\bibliography{references}
\bibliographystyle{IEEEtran}

\end{document}